\newtheorem{definition}{Definition}
\newtheorem{theorem}{Theorem}
\newtheorem{proof}{Proof}
\newcommand\bL{{\bf L}}
\newcommand\be{{\bf e}}
\newcommand\bv{{\bf v}}
\newcommand\bz{{\bf z}}
\newcommand\by{{\bf y}}
\begin{document}
\title{Pruned Collapsed Projection-Aggregation Decoding of Reed-Muller Codes}

 \author{%
   \IEEEauthorblockN{Qin Huang*, \emph{Senior Member}, Bin Zhang}
   \IEEEauthorblockA{Department of Electronic and Information Engineering, Beihang University, Beijing, China, 100191\\
                    Email: qhuang.smash@gmail.com; benzhang@buaa.edu.cn}
 }

%
%


\maketitle

\begin{abstract}
The paper proposes to decode Reed-Muller (RM) codes by projecting onto only a few subspaces such that the number of projections is significantly reduced. It reveals that the probability that error pairs are canceled simultaneously in two different projections is determined by their intersection size. Then, correlation coefficient which indicates the intersection size of two subspaces in a collection is introduced for collecting subspaces. Simulation results show that our proposed approach with a small number of projections onto collected subspaces performs close to the original approach.
\end{abstract}


\renewcommand{\thefootnote}{}
\footnotetext{This work was supported by the National Natural Science Foundation of China under Grant 62071026 and 61941106.  \textit{(Q. Huang and B. Zhang contributed equally to this work.)}}

\section{Introduction}

Reed-Muller (RM) codes which can be designed from the set of all the $m$-variate Boolean polynomials of degree $r$ or less were first discovered by Muller \cite{muller1954application} and the first decoder was devised by Reed \cite{reed1953class}.
RM codes outperform polar codes under maximum a posteriori probability (MAP) decoding for general symmetric channels\cite{arikan2009channel},\cite{hussami2009performance}. However, the MAP decoding is impractical due to its overwhelming complexity as code length increases. Thus, it is  an important question to decode RM code with practical complexity.

In 2004, Dumer \cite{dumer2004recursive} decomposed a RM($r$,$m$) code into trivial codes, such as repetition codes and codes with rate 1. The resulted recursive list decoding \cite{dumer2004recursive} and \cite{dumer2006soft} performs very well on short RM codes. Recent years, many approaches exploited the large automorphism groups \cite{macwilliams1977theory} of RM codes \cite{ye2020recursive}, \cite{lian2020decoding}, \cite{santi2018decoding} and \cite{hashemi2018decoding}. Among these, \emph{recursive projection-aggregation} (RPA) decoding \cite{ye2020recursive} achieves excellent performance on low-rate RM codes. RPA projects the received vector onto all the one-dimensional subspaces of the vector space $\mathbb{F}_2^m$ over binary field in a recursive way, and then obtains a large number $2^{m-1}$ of trivial codes. Decode these trivial codes, and then aggregate these estimates to make a decision on the received vector. Unlike the recursive manner of RPA, \emph{collapsed projection-aggregation} (CPA) decoding directly projects the received vector onto all the $(r-1)$ dimensional subspaces \cite{lian2020decoding}. After decoding $N_{r,m}=\binom{m}{r-1}_2=\prod_{i=1}^{r-1} \frac{2^{m-i+1}-1}{2^{r-i}-1}$ different RM$(1,m-r+1)$ codes of these subspaces, $N_{r,m}$ estimates are aggregated to make a decision.

In order to reduce the number of projections, this paper proposes to project only a few subspaces from the $N_{r,m}$ subspaces, and come up with a \emph{pruned collapsed projection-aggregation} (PCPA) algorithm. In this paper, we find that a pair of errors in the received vector will be canceled each other out in the projection of a subspace, if their indices belong to the same coset of the subspace. Then we prove that the probability that error pairs are canceled simultaneously in two different projections, is determined by the intersection size of the two subspaces. Thus, we define correlation coefficient for collecting subspaces, which indicates the intersection size of two subspaces in a collection. Our analysis verifies that for two different subspaces, the larger correlation coefficient is, the more possible their estimates coincide. As a result, we propose to decode RM codes by projecting the collected subspaces with small correlation coefficients. Simulation results show that under the same size, subspace collections with smaller correlation coefficients outperform those with larger correlation coefficients. Moreover, our proposed PCPA with $64$ projections performs close to CPA with $2667$ projections and RPA with $8001$ projections when decoding RM$(3,7)$ code. As a result, it provides a good trade-off between performance and complexity.

The rest of this paper is organized as follows. Section II presents the necessary background. Our proposed PCPA decoding as well as the analysis of correlation coefficients is described in Section III. Section IV gives our simulation results and makes a comparison. Section V concludes this paper.

\section{Preliminaries}

For any integers $r$ and $m$ with $0 \le r \le m$, define $\mathbb{E}:=\mathbb{F}_2^m$.
Let $\bz = (z_1,z_2,...,z_m)$ denote a vector of $\mathbb{E}$ and $f=f(z_1,z_2,...,z_m)$  denote a $m$-variate polynomial Boolean function of degree at most $r$.
The $r$-th order binary Reed-Muller code RM$(r,m)$ of length $n=2^m$ is the set of all vectors $\bv$ which is obtained
from $f$ by evaluating $f$ at all elements of $\mathbb{E}$. The components of $\bv = (v_0, v_1, ..., v_{2^m-1})$ are indexed by $\bz \in \mathbb{E}$.

Let $\mathbb{B}$ be an $s$-dimensional subspace of $\mathbb{E}$, $0 \le s\le r$, and the coset of $\mathbb{B}$ in the quotient space $\mathbb{E}/\mathbb{B}$ is denoted by $T$. For a binary vector $\bv=(\bv(\bz),\bz \in \mathbb{E})$, its projection on the cosets of $\mathbb{B}$ is
\begin{equation}
    \bv_{/ \mathbb{B}} = \text{Proj} (\bv, \mathbb{B}):=(v_{/\mathbb{B}}(T),T \in \mathbb{E}/\mathbb{B}),
\end{equation}
where
\begin{equation}\label{eq:vT}
    v_{/\mathbb{B}}(T) := \underset{\bz \in T}{\oplus}v(\bz).
\end{equation}
It has been proved that if $\bv \in$ RM$(r,m)$, $\bv_{/ \mathbb{B}} \in$ RM$(r-s, m-s)$.

For a binary-input memoryless channel $W:\{0,1\}\to \mathcal{W} $, the
log-likelihood (LLR) ratio of the channel output $\by$ is
\begin{equation}\label{eq:llr}
    \bL(\by) = \ln( \frac{W(\by|0)}{W(\by|1)} ).
\end{equation}
The projection of a channel output vector $\bL(\by)$ of length $2^m$, where $L(y_i)$, $i=0,1,\ldots,2^m-1$, is given by (\ref{eq:llr}),  on the cosets of $\mathbb{B}$ is defined as
\begin{equation}
    \bL_{/ \mathbb{B}} :=\text{Proj} (\bL, \mathbb{B}):=(\bL_{/\mathbb{B}}(T),T \in \mathbb{E}/\mathbb{B}),
\end{equation}
where
\begin{equation}\label{eq:LT}
    \bL_{/\mathbb{B}}(T)=2 \tanh ^{-1}(\underset{\bz \in T}{\prod}
    \tanh(\frac{L(\bz)}{2})).
\end{equation}

CPA works in an iterative manner. Take the received $\bL$ as $\hat{\bL}^{(0)}$.
In the $i$-th iteration, project $\hat{\bL}^{(i-1)}$ onto all the $(r-1)$-dimensional subspaces of $\mathbb{E}$ and get $N_{r,m}=\binom{m}{r-1}_2=\prod_{i=1}^{r-1} \frac{2^{m-i+1}-1}{2^{r-i}-1}$ different RM$(1,m-r+1)$ LLR vectors. Decode these vectors using \emph{fast Hadamard transform} (FHT) and get $N_{r,m}$ estimates from the $N_{r,m}$ trivial codes. Using majority votes as shown in Algorithm 2, we aggregate all the estimates together with $\hat{\bL}^{(i-1)}$ to obtain a new estimate $\hat{\bL}^{(i)}$. Iterate above steps until reach the maximum number of iterations.

\section{Pruned Collapsed Projection-Aggregation Decoding}
\label{sec:ourworks}

When decoding a received vector of RM$(r,m)$, CPA projects the vector onto all the $(r-1)$-dimensional subspaces of $\mathbb{E}$. The large number of projections promises good performance, but brings huge computational complexity. In order to reduce the computational complexity, we come up with the proposed PCPA algorithm.

\subsection{Algorithm Description}
\label{subsec:PCPA}

Let $S$ denote the collection of subspaces from the $N_{r,m}$ subspaces and $\mathbb{B}_i$ is the $i$-th subspace in $S$ where $i=1,2,..,|S|$. Our proposed PCPA only projects onto the $|S|$ $(r-1)$-dimensional subspaces of $\mathbb{E}$. We will explain the subspace collection in the next subsection. The rest of the decoding steps of PCPA mimic those of CPA. The pseudo code is as follows.

\begin{algorithm}
\caption{CPA and PCPA}

\textbf{Input:} The LLR vector $\bL$; the collection of $(r-1)$-dimensional subspaces; the maximal number of iterations $T_{max}$; a scaling factor $\omega_{r,m}$ used to optimize the algorithm.

\textbf{Output:} The decoded codeword $\hat{\bv}$

\begin{algorithmic}
\State $\hat{\bL}^{(0)} = \bL$
\For {$i=1,2,\dots,T_{\max}$}

\State $L_{/\mathbb{B}_j} \gets \texttt{proj} (\bL,\mathbb{B}_j)$ for all $\mathbb{B}_j \in S$
\State $\hat{\bv}_{/\mathbb{B}_j} \gets \texttt{FHT} (\bL_{/\mathbb{B}_j})$ for all $\mathbb{B}_j \in S$
\State $\hat{\bL}^i \gets \texttt{Aggregation}
(\hat{\bL}^{(i-1)},\hat{\bv}_{/\mathbb{B}_1},\hat{\bv}_{/\mathbb{B}_2},...,\hat{\bv}_{/\mathbb{B}_{|S|}},\omega_{r,m})
$
\EndFor

\State $\hat{v}(\bz) \gets
\mathbbm{1}[ \hat{L}^{(N_{max})}(\bz)<0]$ for each $\bz \in \mathbb{E}$

\end{algorithmic}
\end{algorithm}

\begin{algorithm}
\caption{Aggregation}
\textbf{Input:} $\bL,\hat{\bv}_{/\mathbb{B}_1},\hat{\bv}_{/\mathbb{B}_2},...,\hat{\bv}_{/\mathbb{B}_{|S|}}
,\omega_{r,m}$

\textbf{Output:} $\hat{\bL}$

\begin{algorithmic}
\State $\hat{\bL}(z) \gets \omega_{r,m} \sum_{i=1}^{|S|}(1-2\hat{v}_{/\mathbb{B}_i}(T))L_{/\mathbb{B}}(T-\{\bz\})$
\State for each $\bz \in \mathbb{E}$ and $T$ is the coset of subspace $\mathbb{B}_i$ which contains $\bz$

\end{algorithmic}
\end{algorithm}

Note that if we use all the $(r-1)$-dimensional subspaces in $\mathbb{E}$, PCPA becomes CPA.

\subsection{Collecting Subspaces with Correlation Coefficients}
\label{subsec:coef}
Before giving a collecting guideline, we would like to discuss the correlation between the projections obtained from two different subspaces.

Consider two  subspaces $\mathbb{B}_1$ and $\mathbb{B}_2$, as well as their intersection $D=\mathbb{B}_1 \cap \mathbb{B}_2$. Every coset $T_1 = \bz'+\mathbb{B}_1$ of $\mathbb{B}_1$, where $\bz'\in\mathbb{E}$, can be rewritten as $T_1 = \bz'+D \cup (\mathbb{B}_1/D)$. So does the coset $T_2$ of $\mathbb{B}_2$: $T_2 = \bz'+D \cup (\mathbb{B}_2/D)$.
Taking these into (\ref{eq:vT}),
the projection $v(T_1)$ and $v(T_2)$ can be written as:

\begin{equation}\label{eq:d_b}
\begin{split}
	v_{/\mathbb{B}_1}(T_1) := [\underset{\bz \in \bz'+D}{\oplus}v(\bz)]\oplus
[\underset{\bz \in \bz'+\mathbb{B}_1/D}{\oplus}v(\bz)] \\
	v_{/\mathbb{B}_2}(T_2) := [\underset{\bz \in z'+D}{\oplus}v(\bz)]\oplus
[\underset{\bz \in \bz'+\mathbb{B}_2/D}{\oplus}v(\bz)]
\end{split}
\end{equation}

From the above formulas, we can see if paired errors occur at the entries which are indexed by $\bz$ where $\bz\in \bz'+D$,
they cancel each other out when projecting onto $T_1$ and $T_2$.
An example is shown in Fig. \ref{fig:eg1}.
Suppose that the first two entries of vector $\bv$  are corrupted.
After projecting $\bv$ onto two different subspaces
$\mathbb{B}_1$ and $\mathbb{B}_2$, whose intersection $D=\{(0\ 0\ 0\ 0),
(0\ 0\ 0\ 1),(0\ 0\ 1\ 0),(0\ 0\ 1\ 1)\}$,
the errors are canceled in the
projection $\bv_{/\mathbb{B}_1}$ and $\bv_{/\mathbb{B}_2}$, simultaneously.
In other words, there are no chances to detect the errors from the two projections.

However, if the two subspaces do not share the vectors
$\{(0\ 0\ 0\ 1),(0\ 0\ 1\ 0),(0\ 0\ 1\ 1)\}$, the errors will be detected.
Another example is shown in Fig. \ref{fig:eg2}, we respectively project $\bv$ onto
$\mathbb{B}_1$ and $\mathbb{B}_3$ whose intersection $D=\{(0\ 0\ 0\ 0)\}$. Although the errors are canceled in $\bv_{/\mathbb{B}_1}$, they are remained in $\bv_{/\mathbb{B}_3}$. If we could decode $\bv_{/\mathbb{B}_3}$ correctly, we will detect the errors. The following theorem demonstrates that the smaller intersection between two subspaces is, the more possible we can detect a pair of errors.

\begin{figure}[htbp]
   \centering
   \includegraphics[width=0.50\textwidth]{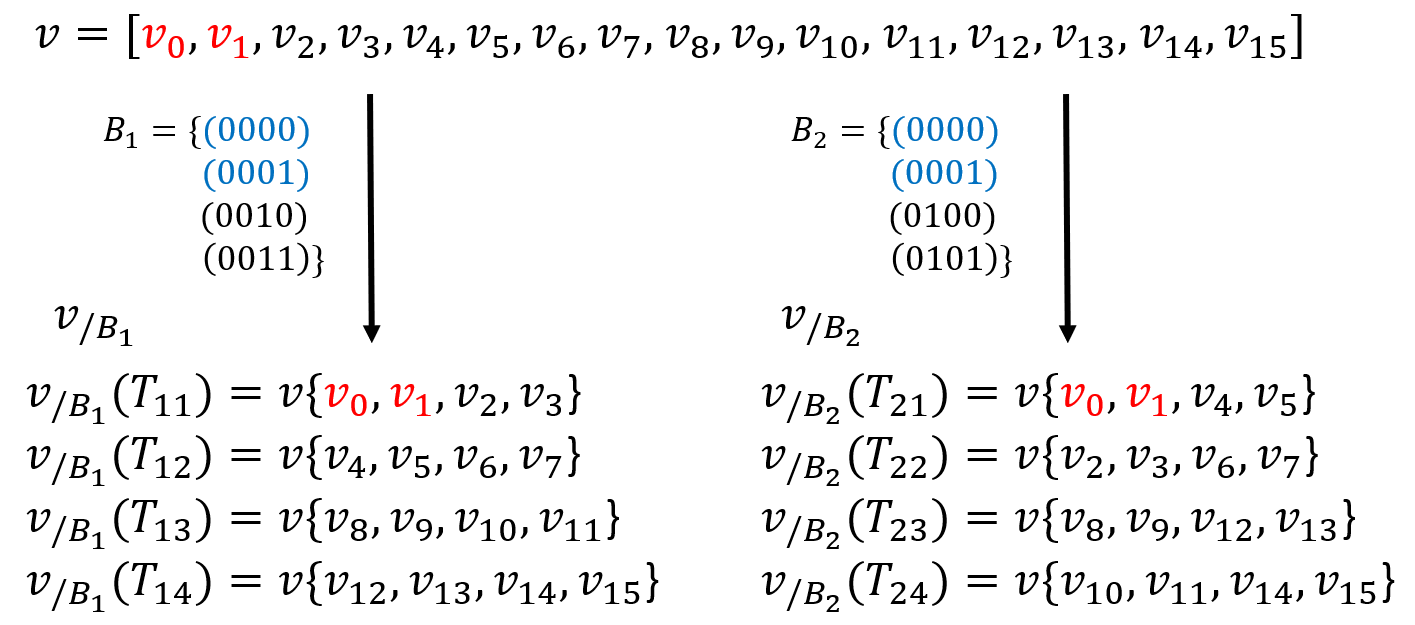}
   \caption{The projections $\bv_{/\mathbb{B}_1}$ and $\bv_{/\mathbb{B}_2}$ onto the cosets of $\mathbb{B}_1$ and $\mathbb{B}_2$. The corruptions are marked as red and the intersection of subspaces is marked as blue.}
   \label{fig:eg1}
\end{figure}

\begin{figure}[htbp]
   \centering
   \includegraphics[width=0.50\textwidth]{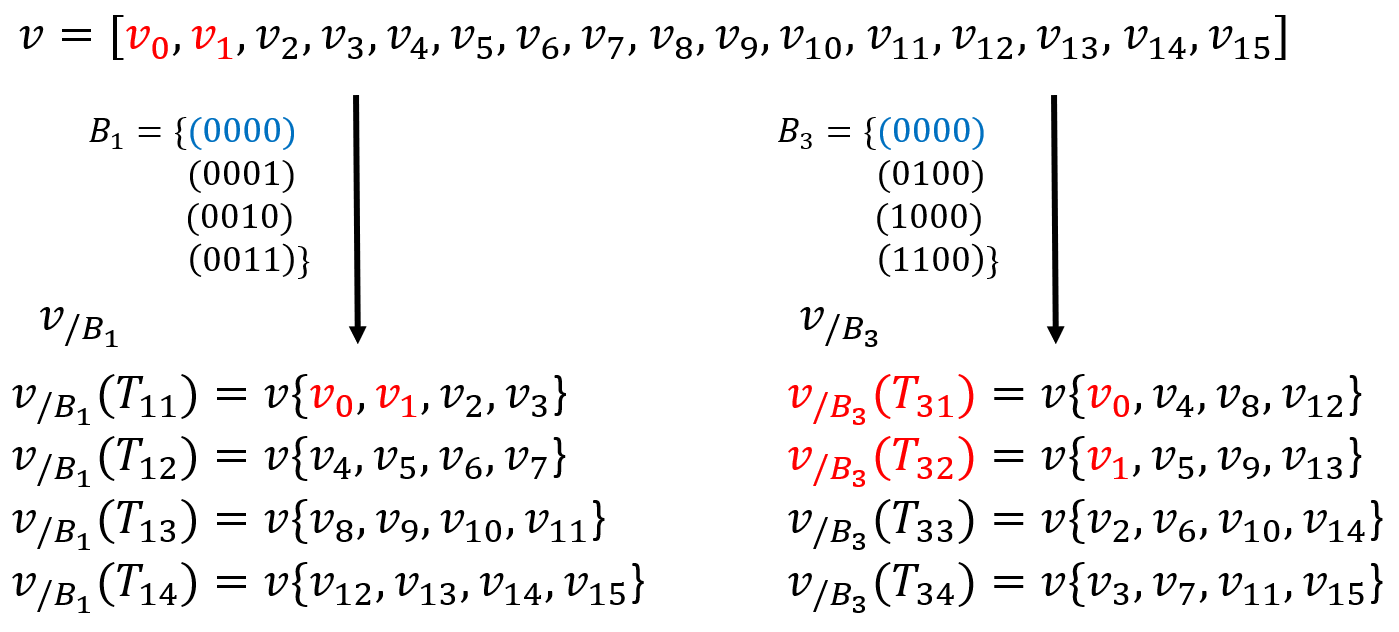}
   \caption{The projections $\bv_{/\mathbb{B}_1}$ and $\bv_{/\mathbb{B}_2}$ onto the cosets of $\mathbb{B}_1$ and $\mathbb{B}_3$. The corruptions are marked as red and the intersection of subspaces is marked as blue.}
   \label{fig:eg2}
\end{figure}

\begin{theorem}\label{theorem1}
Consider an error pattern $\be$ of length of $2^m$, where errors occur independently with probability $\epsilon$.  Project $\be$ onto $s$-dimensional subspaces $\mathbb{B}_1$ and $\mathbb{B}_2$, and obtain
$\be_{/\mathbb{B}_1}$ and $\be_{/\mathbb{B}_2}$. For any $\bz\in\mathbb{E}$, the probability that $\be_{/\mathbb{B}_1}(\bz+\mathbb{B}_1)=\be_{/\mathbb{B}_2}(\bz+\mathbb{B}_2)$ is

\begin{equation}
P=\frac{1}{2}[1+(1-2\epsilon)^{(2^{s+1}-2|\mathbb{B}_1 \cap \mathbb{B}_2|)}]
\end{equation}
\end{theorem}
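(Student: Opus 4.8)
\emph{Proof strategy.} The plan is to reduce the event in question to the parity (XOR) of a fixed collection of independent Bernoulli$(\epsilon)$ bits whose cardinality turns out to be $2^{s+1}-2|\mathbb{B}_1\cap\mathbb{B}_2|$, and then to apply the elementary formula for the distribution of such a parity.

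First I would rewrite the target equality as $\be_{/\mathbb{B}_1}(\bz+\mathbb{B}_1)\oplus\be_{/\mathbb{B}_2}(\bz+\mathbb{B}_2)=0$ and, using the definition (\ref{eq:vT}), expand the left-hand side as $\bigoplus_{\mathbf{x}\in\bz+\mathbb{B}_1}e(\mathbf{x})\oplus\bigoplus_{\mathbf{x}\in\bz+\mathbb{B}_2}e(\mathbf{x})$. The key structural observation — the one already illustrated by Figs.~\ref{fig:eg1} and \ref{fig:eg2} — is that an index lying in \emph{both} cosets contributes $e(\mathbf{x})$ twice and hence cancels, so the sum collapses to a parity over the symmetric difference $(\bz+\mathbb{B}_1)\,\triangle\,(\bz+\mathbb{B}_2)$, which is a deterministic index set that does not depend on the realization of $\be$.

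Next I would pin down that set. Writing $D=\mathbb{B}_1\cap\mathbb{B}_2$, one checks that $(\bz+\mathbb{B}_1)\cap(\bz+\mathbb{B}_2)=\bz+D$ (if $\bz+\mathbf{b}_1=\bz+\mathbf{b}_2$ with $\mathbf{b}_i\in\mathbb{B}_i$, then $\mathbf{b}_1=\mathbf{b}_2\in D$), so by inclusion--exclusion $|(\bz+\mathbb{B}_1)\,\triangle\,(\bz+\mathbb{B}_2)|=2\cdot 2^{s}-2|D|=2^{s+1}-2|\mathbb{B}_1\cap\mathbb{B}_2|=:k$. Since the entries of $\be$ are i.i.d. Bernoulli$(\epsilon)$, the $k$ bits indexed by this symmetric difference are themselves i.i.d. Bernoulli$(\epsilon)$, and the event in question is exactly that their XOR equals zero.

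Finally I would invoke the standard identity: for i.i.d. Bernoulli$(\epsilon)$ bits $X_1,\dots,X_k$ one has $\mathbb{E}[(-1)^{X_i}]=1-2\epsilon$, hence $\mathbb{E}[(-1)^{X_1\oplus\cdots\oplus X_k}]=(1-2\epsilon)^{k}$ by independence, and therefore $\Pr[X_1\oplus\cdots\oplus X_k=0]=\tfrac12\bigl(1+(1-2\epsilon)^{k}\bigr)$. Substituting $k=2^{s+1}-2|\mathbb{B}_1\cap\mathbb{B}_2|$ gives the stated $P$. I expect the only delicate point to be the coset bookkeeping in the first two steps — verifying that the common part of the two cosets is exactly $\bz+D$ and that every other index occurs with multiplicity one — after which the probabilistic step is routine.
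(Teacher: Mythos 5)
Your proof is correct, and it rests on the same essential observation as the paper's — that the contribution of the shared part $\bz+(\mathbb{B}_1\cap\mathbb{B}_2)$ of the two cosets cancels — but you execute the final probabilistic step differently. The paper splits each coset into the intersection piece and its complement, introduces the error counts $N_1$ and $N_2$ on the two disjoint complements $\bz'+\mathbb{B}_1/D$ and $\bz'+\mathbb{B}_2/D$ (each of size $N_q=2^s-|D|$), and computes $P=P_{11}P_{21}+P_{12}P_{22}$ by a ``both odd or both even'' case analysis using two separate binomial parity sums. You instead collapse the entire event to a single parity over the symmetric difference $(\bz+\mathbb{B}_1)\,\triangle\,(\bz+\mathbb{B}_2)$, which has cardinality $2^{s+1}-2|D|$, and apply the parity identity $\Pr[X_1\oplus\cdots\oplus X_k=0]=\tfrac12(1+(1-2\epsilon)^k)$ once. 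Your coset bookkeeping is sound: $(\bz+\mathbb{B}_1)\cap(\bz+\mathbb{B}_2)=\bz+D$ exactly, every other index appears with multiplicity one, and the bits on the symmetric difference are i.i.d., so the reduction is valid. The two arguments are equivalent (your single exponent $k=2N_q$ is exactly what the paper obtains after multiplying out its two factors), but yours is a little shorter and avoids the explicit independence-of-$N_1$-and-$N_2$ step by absorbing it into the product form of $\mathbb{E}[(-1)^{\sum X_i}]$; the paper's version makes the ``errors cancel in both projections simultaneously'' intuition slightly more visible, which is the point it wants to emphasize for the subsequent definition of the correlation coefficient.
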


\begin{proof}
The intersection of subspaces $\mathbb{B}_1$ and $\mathbb{B}_2$ is denoted by $D=\mathbb{B}_1 \cap \mathbb{B}_2$.
Note that every coset $T_1=\bz'+\mathbb{B}_1$ of $\mathbb{B}_1$ can be rewritten as $T_1=(\bz'+D) \cup (\bz'+ \mathbb{B}_1/D)$. So does the coset $T_2$ of $\mathbb{B}_2$: $T_2= (\bz'+D) \cup (\bz'+ \mathbb{B}_2/D)$.

Let $N_1$ denote the number of errors in $(e(\bz), \bz \in \bz'+\mathbb{B}_1/D)$
and $N_2$ denote the number of errors in $(e(\bz), \bz \in \bz'+\mathbb{B}_2/D)$.
Take $N_q=|\mathbb{B}_1/D|=|\mathbb{B}_2/D|
=2^s-|\mathbb{B}_1 \cap \mathbb{B}_2|$.
According to Equation
(\ref{eq:d_b}), $e_{/\mathbb{B}_1}(T_1)=e_{/\mathbb{B}_2}(T_2)$ if and only if $N_1$ and $N_2$ are both odd or even.

Because the errors occur independently, the probability $P_{12}$ that $N_1$ is even equals to the probability $P_{22}$ that $N_2$ is even:
\begin{equation}
\begin{split}
P_{12} = P_{22} &= \frac{1}{2}\sum_{i=0}^{N_q}(1+(-1)^i)\binom{N_q}{i}
\epsilon ^i (1-\epsilon)^{N_q-i}\\
&=\frac{1}{2}[1+(1-2\epsilon)^{N_q}]
\end{split}
\end{equation}
And the probability $P_{11}$ that $N_1$ is odd equals to the probability $P_{21}$ that $N_2$ is odd:
\begin{equation}
\begin{split}
P_{11} = P_{21} &= \frac{1}{2}\sum_{i=0}^{N_q}(1-(-1)^i)\binom{N_q}{i}
\epsilon ^i (1-\epsilon)^{N_q-i}\\
&=\frac{1}{2}[1-(1-2\epsilon)^{N_q}]
\end{split}
\end{equation}

The probability $P$ that $N_1$ and $N_2$ are both odd or even is:
\begin{equation}
\begin{split}
P&=P_{11}P_{21}+P_{12}P_{22}\\
&=\{\frac{1}{2}[1+(1-2\epsilon)^{N_q}]\}^2+
\{\frac{1}{2}[1-(1-2\epsilon)^{N_q}]\}^2\\
&=\frac{1}{2}[1+(1-2\epsilon)^{2N_q}].
\end{split}
\end{equation}
\end{proof}
This completes the proof.

As a result, we define the following correlation coefficients to help us to collect subspaces for projections.

\begin{definition}
Let $\mathbb{B}_i$ and $\mathbb{B}_j$, $i,j=1,2,\ldots,|S|$, be two $s$-dimensional subspaces in $\mathbb{E}$.
The correlation coefficient between $\mathbb{B}_i$ and $\mathbb{B}_j$ is defined as

\begin{equation}
    r_{ij} := \frac{\dim(\mathbb{B}_i \cap \mathbb{B}_j)}{s}
\end{equation}

\end{definition}

\begin{definition}
Let $S$ denote a collection of $s$-dimensional subspaces in $\mathbb{E}$.
$\mathbb{B}_i$ is the $i$-th subspace contained in $S$
for $i=1,2,...,|S|$.
The correlation coefficient of $S$ is defined as

\begin{equation}
    r_S := \sum_{i=1}^{|S|}\sum_{j=1}^{|S|}r_{ij}
\end{equation}

\end{definition}

To show the impact of correlation coefficients on projection results, we take the decoding of RM$(3,5)$ code as an example. There are $N_{3,5}=155$ two-dimensional subspaces for projections at each iteration while CPA decoding. We illustrate the correlation coefficients $r_{ij}$  in Fig. \ref{fig:first_case}. Both the $x$-axis and the $y$-axis represent the indices of the two-dimensional subspaces and the value of grid $(i,j)$ is $r_{ij}$, $i,j=1,2,...,155$. Since there are $155$ two-dimensional subspaces for projections, we have $155$ estimates at each iteration. We take $10000$ Monte Carlo trials at $2.0$dB over the binary input additive white Gaussian noise channel(BAWGNC) and count the frequency $p_{ij}$ that the $i$-th estimate and the $j$-th estimate are correct or false simultaneously at the first iteration for $i,j=1,2,...,155$. The frequency matrix $P=\{p_{ij}\}_{155\times 155}$ is shown  in Fig. \ref{fig:second_case}.
Both the $x$-axis and the $y$-axis represent the indices of the two-dimensional subspaces and the value of grid $(i,j)$ is $p_{ij}$. From the above two figures, for the $i$-th and the $j$-th subspaces, $i=1,2,\ldots,155$ and $j=1,2,\ldots,155$, we find that the larger correlation coefficient $r_{ij}$ is, the higher frequency $p_{ij}$ is. In other words, two subspaces with large correlation coefficient intend to make similar estimates. Thus, we propose to perform our proposed PCPA algorithm with the collection of subspaces, which has small correlation coefficient $r_S$. 
Simulation results in the next section will verify that these projections result in good error performance.

 \begin{figure*}[htbp]
   \centering
     \subfigure[Frequency matrix]
   {\includegraphics[width=3.2in]{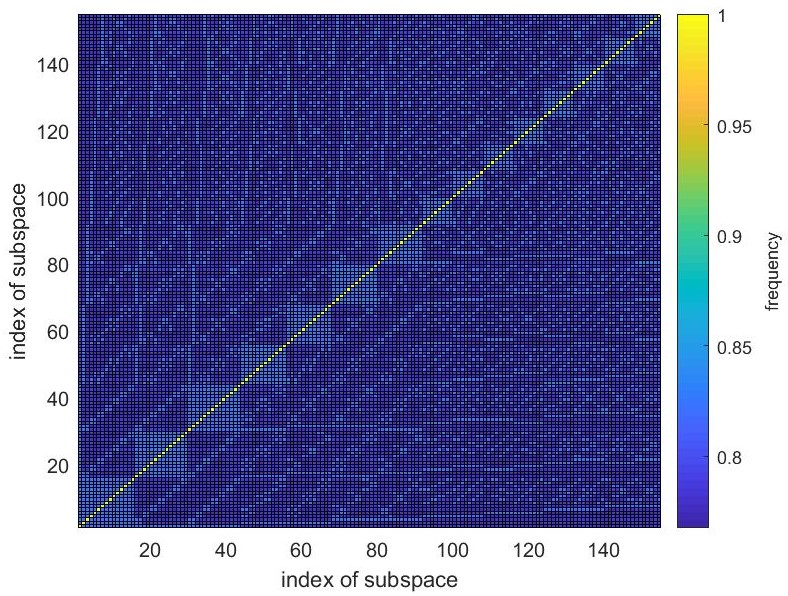}
    \label{fig:first_case}}
     \subfigure[Correlation coefficient matrix]
   {\includegraphics[width=3.2in]{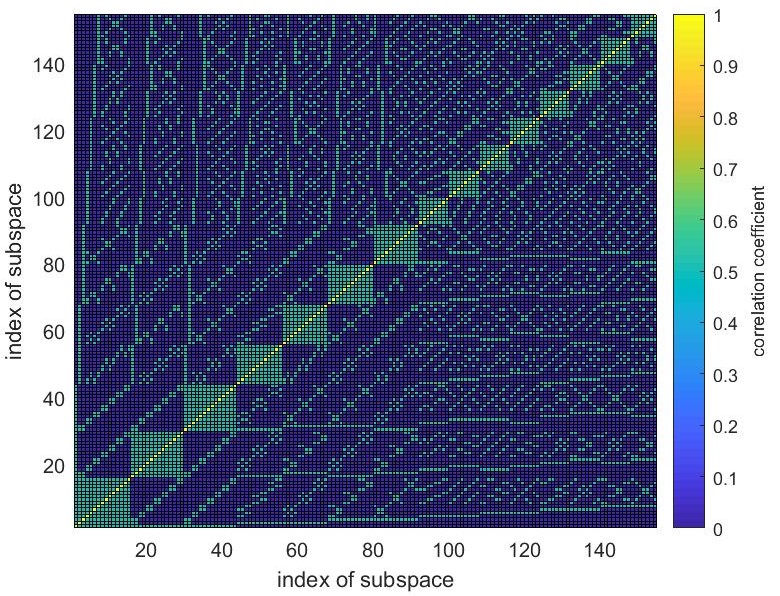}
    \label{fig:second_case}}
   \caption{The frequency matrix and the correlation coefficient matrix of RM$(3,5)$ code.}
   \label{fig:sim2}
 \end{figure*}

\section{Simulation Results}\label{sec:sim}

The PCPA algorithm is tested on RM$(3,5)$ code and RM$(3,7)$ code over the additive white Gaussian noise (AWGN) channel, respectively.
We set the maximal number of iterations $T_{max}=3$ for RPA, CPA and PCPA.
The size of collection which PCPA works with for RM$(3,7)$ is $64$ (denote as PCPA-$64$) and the size of collection which PCPA works with for RM$(3,5)$ is $9$ (denote as PCPA-$9$).
To illustrate that the performance varies with subspace collections,
we choose three different collections in each trial.
Lower union bound is used as a benchmark.

\begin{figure}[htbp]
   \centering
   \includegraphics[width=0.45\textwidth]{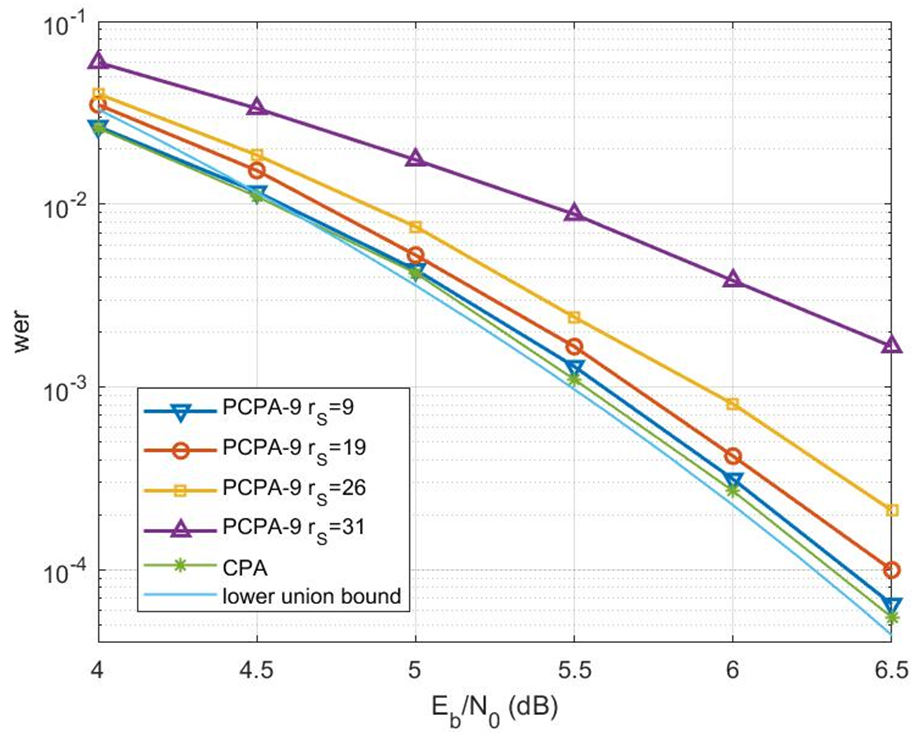}
   \caption{Decoding performance of RM$(3,5)$ code with various algorithms over AWGN channel}
   \label{fig:sim1}
\end{figure}

The simulation results of RM$(3,5)$ are shown in Fig. \ref{fig:sim1}. The PCPA-$9$ decoder with $r_S=9$ significantly outperforms that with $r_S=31$. Moreover, it performs very close to CPA and the lower union bound. In terms of computational complexity, the PCPA-9 decoder for RM$(3,5)$ decodes 9 RM$(1,5)$ codes per iteration, while CPA needs $155$ times.

\begin{figure}[htbp]
   \centering
   \includegraphics[width=0.45\textwidth]{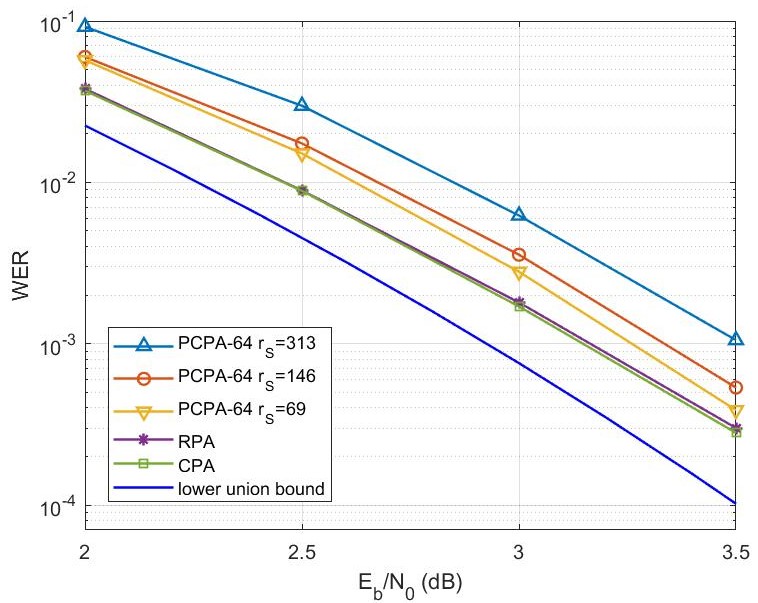}
   \caption{Decoding performance of RM$(3,7)$ code with various algorithms over AWGN channel}
   \label{fig:sim2}
\end{figure}

The simulation results of RM$(3,7)$ are shown in Fig. \ref{fig:sim2}. The PCPA-$64$ decoder with $r_S=313$ performs roughly $0.25$dB worse than the PCPA-$64$ with $r_S=69$ at word error rate (WER) of $10^{-3}$. Besides, the PCPA-$64$ decoder with $r_S=69$ performs only $0.1$dB away from RPA and CPA at WER of $10^{-3}$. In terms of computational complexity, the PCPA-64 decoder for RM$(3,7)$ decodes 64 RM$(1,5)$ codes per iteration, while CPA needs $2667$ times and RPA needs $8001$ times, respectively.

\section{Conclusion}

The paper proposes to collect a few subspaces for projections in terms of our correlation coefficients. Simulation results show that our proposed PCPA is able to approach the performance of the original CPA with only a small collection of subspaces. In short, our proposed PCPA provides a good trade-off between complexity and performance.




\bibliographystyle{IEEEtran}
\bibliography{IEEEabrv_2,template_isit21_2.bib}

\end{document}